\documentclass{article}

\usepackage{amsmath}    
\usepackage{graphicx}   
\usepackage{epstopdf}
\usepackage{verbatim}   
\usepackage{color}      
\usepackage{subfigure}  
\usepackage{algorithm}
\usepackage{array}
\usepackage{algorithmic}
\usepackage{nomencl}
\usepackage{graphicx}
\usepackage{amssymb}
\usepackage{enumerate}
\usepackage{amsthm}
\usepackage{url}
\usepackage[hidelinks]{hyperref}

\usepackage{pgfplots}
\newtheorem{example}{Example}
\newtheorem{lemma}{Lemma}

\newtheorem{observation}{Observation}

\newcommand{\finish}{~\hfill$\square$}

\newcommand{\task}{\tau}
\newcommand{\taskperiod}{T}
\newcommand{\taskexecution}{C}

\newcommand{\taskset}{\Gamma}
\newcommand{\tasknumber}{n}
\newcommand{\taskindex}{i}

\newcommand{\hyperperiod}{H}

\newcommand{\delay}{D}

\newcommand{\robuststate}{S_0}
\newcommand{\failprobablestate}{S_P}
\newcommand{\failstate}{S_F}
\newcommand{\rejuvenationstate}{S_R}

\newcommand{\rejuvenatetime}{E_r}
\newcommand{\reboottime}{E_b}
\newcommand{\switchsuccess}{\rho}

\newcommand{\switchfailrate}{\lambda_0}
\newcommand{\failrate}{\lambda}
\newcommand{\rejuvenateperiod}{T_r}
\newcommand{\rejuvenateperiodrequire}{T_{0}}
\newcommand{\lifetime}{L}
\newcommand{\reliabilitiy}{R(\lifetime, \rejuvenateperiod)}
\newcommand{\availability}{A(\lifetime, \rejuvenateperiod)}
\newcommand{\reliabilityrequire}{R_{0}}
\newcommand{\mainprocessor}{\mathcal{P}_M}
\newcommand{\backupprocessor}{\mathcal{P}_B}

\newcommand{\effectivestart}{t_s}
\newcommand{\completetime}{t_f}
\newcommand{\rejuvenateperiodmin}{T_{\min}}

\newcommand{\CDF}{F}
\newcommand{\CCDF}{\overline{F}}

\newcommand{\myparagraph}[1]{\vspace{0.08in}\noindent\textbf{#1}}

\begin{document}
	
\title{Optimizing System Quality of Service through Rejuvenation for Long-Running Applications with Real-Time Constraints}

\author{Chunhui Guo\thanks{Chunhui Guo, Hao Wu, Xiayu Hua, and Shangping Ren are with the Department of Computer Science at Illinois Institute of Technology, Chicago, IL 60616, USA. Email: \{ cguo13, hwu28, xhua \}@hawk.iit.edu, ren@iit.edu}, Hao Wu, Xiayu Hua, Shangping Ren, Jerzy Nogiec\thanks{Jerzy Nogiec is with the Fermi National Accelerator Laboratory, Batavia, IL 60510, USA. Email: nogiec@fnal.gov}
}

\date{October 28, 2015}

\maketitle

\begin{abstract}
	Reliability, longevity, availability, and deadline guarantees are the four most important metrics to measure the QoS of long-running safety critical real-time applications. Software aging is one of the major factors that impact the safety of long-running real-time applications as the degraded performance and increased failure rate caused by software aging can lead to deadline missing and catastrophic consequences. Software rejuvenation is one of the most commonly used approaches to handle issues caused by software aging. In this paper, we study the optimal time when software rejuvenation shall take place so that the system's reliability, longevity, and availability are maximized, and application delays caused by software rejuvenation is minimized. In particular, we formally analyze the relationships between software rejuvenation frequency and system reliability, longevity, and availability. Based on the theoretic analysis, we develop approaches to maximizing system reliability, longevity, and availability, and use simulation to evaluate the developed approaches. In addition, we design the MIN-DELAY semi-priority-driven scheduling algorithm to minimize application delays caused by rejuvenation processes. The simulation experiments show that the developed semi-priority-driven scheduling algorithm reduces application delays by $9.01\%$ and $14.24\%$ over the earliest deadline first (EDF) and least release time (LRT) scheduling algorithms, respectively.
\end{abstract}

\section{Introduction}
\label{sec:intro}
As technology advances, computer systems become larger and more
complex --- applications are built on top of operating systems and
frameworks; they run in virtual environments and use third party
software components and services. The situation naturally makes it
more difficult or virtually impossible to develop a non-trivial system
to be completely defect-free. A class of residual software defects
produces non-catastrophic results, where applications continue to
provide their functionality, but with degraded performance or
increased use of resources. This process is typically referred to as
\textit{software aging}. 
Software aging is an accumulative process whose general characteristic
is the gradual performance degradation and/or an increase in the
software failure rate~\cite{Grottke2008ISSRE}.
As system aging progresses, the degraded performance
and accumulated errors can eventually lead to catastrophes, such
as low reliability and/or availability. For
instance, the Patriot's software failure that resulted in loss of
human life is caused by accumulated errors~\cite{or1992patriot}.  The
Mars Surveyor '98 Orbiter that launched in 1998 was designed for long
term mission to study the climate on Mars.  Unfortunately it only
worked for 83 days before it was lost in the space~\cite{stephenson1999mars}.

In addition to reliability and availability, the system longevity is another
important QoS factor for long running applications.
In Fermi National Accelerator Laboratory, there are many physics
experiments conducted on site, such as the Main Injector Neutrino
Oscillation Search (MINOS), Muon g-2, NOvA, and
muon-to-electron-conversion (Mu2e), to name a
few~\cite{fermiexp}. These experiments need to run for as long as
possible in order to observe desirable results, and at the same time,
they need also be highly reliable and available during the
experimental time. Hence, not only reliability and availability
requirements of the control system need to be met; more importantly,
the control systems that support the experiments must also be able to
run for long time period because restarting these experiments has
large human labors and financial cost. Unfortunately, aging effects
significantly impact the longevity of these control systems at
Fermilab.

Fig.~\ref{fig:motivation} illustrates a fifteen day CPU utilization and
memory usage of labVIEW~\cite{labview} running on a Fermilab machine
that monitors hundreds of sensors at Fermilab. In theory, the resource
consumption of the monitoring tool shall remain constant as it is
running on a clean server and the sampling interval and data size are
constant. However, from Fig.~\ref{fig:motivation}, we can clearly
observe that both CPU and memory consumption increase linearly with
time. Once CPU and/or memory usage level becomes too high, the system
stops working properly.

\begin{figure}[ht]
\centering
\pgfplotsset{
tick label style={font=\small},
label style={font=\small},
legend style={font=\footnotesize}
}
\begin{tikzpicture}[scale=1]
	\begin{axis}[
		xlabel=Days,
		xlabel near ticks,
		ymin=75,ymax=100,
		ylabel=CPU Utilization (\%),
		ylabel near ticks,
		legend style={at={(0.02,0.1)},anchor=west},
		no markers]
		
	\addplot [solid, thin]table[x=Hours,y=CPU] {fig/motivation.txt};
	\label{plot_one}

	\end{axis}
	
	\begin{axis}[
		hide x axis,
		axis y line*=right,
		ylabel=Memory (KB),
		ylabel near ticks,
		legend style={at={(0.42,0.13)},anchor=west},
		no markers]
		
	\addplot [dotted, thick]table[x=Hours,y=Memory] {fig/motivation.txt};
	\addlegendimage{/pgfplots/refstyle=plot_one}
	\addlegendentry{\footnotesize CPU Utilization}
	\addlegendentry{\footnotesize Memory Usage}

	\end{axis}
\end{tikzpicture}
\caption{Aging Effect on Fermi Monitoring System}
\label{fig:motivation}
\end{figure}
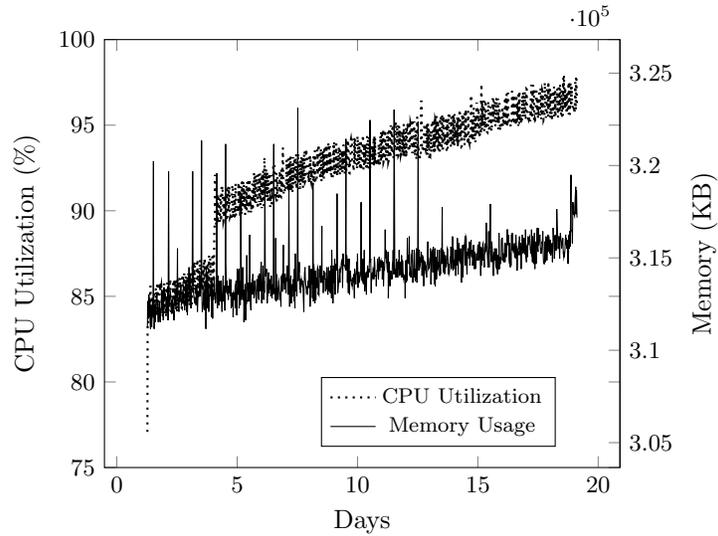

Both hardware and software aging can potentially impact system's
reliability, availability and longevity.  However, hardware wear and
tear aging often takes longer time to show effects on computer
systems~\cite{black1969electromigration}; while on the other hand,
software aging happens more frequently compared to hardware aging, and
software failures cause more outages than hardware failures in today's
computer systems~\cite{garg1998methodology}.
As software aging is inevitable~\cite{Parnas1994ICSE},
software rejuvenation is proposed as a preventive and proactive
fault-tolerance technique to deal with the aging issues~\cite{Huang1995SoftwareRejuvenation}.
Significant amount of research is devoted to
address how to perform software rejuvenation and different approaches
are proposed to rejuvenating software at different
levels. In~\cite{cotroneo2014survey}, Cotroneo et al.  surveyed over
four hundred recent research papers in the area of software aging and
rejuvenation techniques.  A comparative experimental study of software
rejuvenation overhead can be found in~\cite{alonso2013comparative}.

However, in this paper, rather than study \textit{how} to perform
software rejuvenation, we focus on \textit{when} to perform software
rejuvenation and the relationship between software rejuvenation time
points and system QoS in terms of system reliability, availability
and longevity. We consider three types of control applications:
(1) applications that need high reliability within its given lifetime,
(2) applications that need long longevity under a reliability constraint,
and (3) applications that need high availability under given reliability and longevity
constraints.  For each of these three types of applications, we present
an optimal software rejuvenation period. In addition, we develop a
semi-priority-driven MIN-DELAY scheduling algorithm that minimizes
application execution delay caused by performing system rejuvenation.

The rest of the paper is organized as follows: we discuss related work
in Section~\ref{sec:related}.  System models and assumptions the paper
is based upon are presented in Section~\ref{sec:model}. A formal
definition of the problem the paper is to address is also presented in
Section~\ref{sec:model}. 
System reliability, longevity, and availability maximizations are discussed in
Section~\ref{sec:reliability}, Section~\ref{sec:longevity}, and Section~\ref{sec:availability},
respectively. We introduce a semi-priority-driven MIN-DELAY scheduling
algorithm in Section~\ref{sec:scheduling}.  In each of the sections
where mathematical analysis is performed or a new algorithm is developed, i.e., Section~\ref{sec:reliability}, Section~\ref{sec:longevity},
Section~\ref{sec:availability}, and Section~\ref{sec:scheduling}, we
have a subsection to discuss simulation results. Finally,
Section~\ref{sec:conclusion} concludes the paper.

\section{Related Work}
\label{sec:related}
Reliability, availability and longevity are three different but
correlated factors that measure a system's QoS. A highly reliable
system often has high availability and can run for a long period of
time. Hence, researchers and engineers have been mainly focused on
system reliability issues.  Many fault-tolerance mechanisms have been
developed to improve system's reliability. A commonly used
fault-tolerance mechanism is
redundancy~\cite{Platis2008SemiMarkovAvailability2LevelRejuvenation,
  Hanmer2010FiveState}. Redundancy refers to systems that use backup
components with the same functionality as the running components. When
failures occur, systems switch the functionality to their backup
components to maintain operation continuity.  Replication is also a
widely used fault-tolerance mechanism~\cite{Singh1981TMR}. Replication
ensures computation and data are duplicated on the replicas and a
voting scheme is used to decide the correct answers of the
system. Another widely adapted fault-tolerance technique to deal with
system failures is checkpointing and
re-execution~\cite{Bobbio1999CheckpointPetriNets,Zheng2013Checkpoint}. With
checkpointing, the failed system is recovered from previously stored
correct state and re-executed only from the checkpointed state.  These
fault-tolerance techniques aforementioned may not be able to solve
software aging issues unless a failure causes the system to reboot
which resets the system to a fresh and healthy sate.

Software rejuvenation has become a commonly used preventive and
proactive maintenance approach for handling system aging.  It
is first proposed by Huang et
al.~\cite{Huang1995SoftwareRejuvenation}, and is adopted in different
domains, such as telecommunication
systems~\cite{Huang1995SoftwareRejuvenation, Hanmer2010FiveState} and
long-life deep-space mission systems~\cite{Tai1998OnBoardMaintenance,
  Tai1997OptimalDutyPeriod, Tai1998ModelAnalysis}.

Huang et al. developed a
four-state model in which a computer system operates, i.e., the
\textit{Robust} State, \textit{Failure Probable} State,
\textit{Failure} State, and \textit{Rejuvenation} State~\cite{Huang1995SoftwareRejuvenation}.  Since then,
many rejuvenation models have been developed by the research
community~\cite{Huang1995SoftwareRejuvenation,Hanmer2010FiveState}.
For instance, the five-state model~\cite{Hanmer2010FiveState} adds a new
state called \textit{Preparing} State to represent when a system
finishes executing tasks or migrating tasks to another processor if the
system has a backup component. Koutras et al. extended the initial
rejuvenation model by considering two levels of rejuvenation
actions~\cite{Platis2008SemiMarkovAvailability2LevelRejuvenation,
  Limnios2005NonparametricEstimatioReliability}, i.e., perfect
rejuvenation action and minimal rejuvenation action. The perfect
rejuvenation (cold rejuvenation) results in system returning to the
Robust State (initial state), while the minimal rejuvenation (warm
rejuvenation) results in system returning to the Failure Probable
State (the state before rejuvenation). The cost of minimal
rejuvenation is much less than the perfect rejuvenation.

To analyze software aging and study aging related failures, Trivedi et
al.~\cite{Trivedi2000SoftwareRejuvenationModelMeasure} presented two
approaches: analytic modeling approach for determining optimal times
to rejuvenate and measurement based approach for failure detection and
validation.  Tai et al.~\cite{Tai1997OptimalDutyPeriod} identified key
factors that may impact system reliability and developed an approach
to maximizing system reliability by analyzing the optimal interval
between maintenances. Okamura et
al.~\cite{Dohi2008AperiodicRejuvenationOptAvailability} discussed an
maintenance policy that combines aperiodic rejuvenation and periodic
checkpoints to maximize the system availability. The estimators of
reliability and availability were analyzed
in~\cite{Limnios2005NonparametricEstimatioReliability,
  Platis2008MaxLikelihoodEstimatorAvailabilityReliability}.

In this paper, we study \textit{when} to perform rejuvenation to
improve system's reliability, longevity, and availability for long-running 
applications with real-time constraints.  In the study, both
transient failures caused by aging effects and network transmission
failures caused by migrating applications between main and backup
processing units are taken into consideration in determining an optimal
rejuvenation period. In addition, we also study how a task scheduling
algorithm can minimize application execution delay caused by system
rejuvenation processes.

\section{System Models and Problem Formulation}
\label{sec:model}
In this section, we first introduce the models and assumptions our
work is based upon and then formulate the problem we are to address in
the paper.

\subsection{Models and Assumptions}
\label{subsec:model}

\noindent \textbf{Processing Unit State Transition Model}

We adopt the same model and assumptions used
in~\cite{Huang1995SoftwareRejuvenation}, i.e., we assume a processing unit
has the following four states, and the state transition model is shown
in Fig.~\ref{fig:state}.
\begin{itemize}
\item Robust State $\robuststate$: the processing unit starts in this state.
\item Failure Probable State $\failprobablestate$: the processing unit goes
  into this state after continuously running for some time.
\item Failure State $\failstate$: the processing unit may go into the
  failure state from the failure probable state $\failprobablestate$.
  Once the processing unit is in failure state it has to be rebooted in
  order to go back into the robust state $\robuststate$. The time it
  takes for the processing unit to reboot is $\reboottime$
\item Rejuvenation State $\rejuvenationstate$: from the failure
  probable state $\failprobablestate$ the processing unit may also go into
  the rejuvenation state $\rejuvenationstate$.  The processing unit performs
  software rejuvenation once it enters into the state and goes into
  the robust state $\robuststate$ once the rejuvenation process is
  completed. The time it takes for the processing unit to go through
  rejuvenation is $\rejuvenatetime$.
\end{itemize}
\begin{figure}[ht]
  \centering
  \includegraphics[width = 0.5\textwidth]{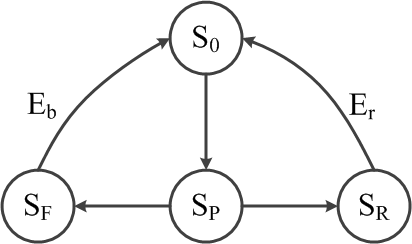} 
  \caption{processing unit State Transition Model with Rejuvenation~\cite{Huang1995SoftwareRejuvenation}}
  \label{fig:state}  
\end{figure}

The processing unit is unavailable when it goes through either reboot or
rejuvenation process. The processing unit downtime caused by each reboot or
rejuvenation is assumed to be a constant $\reboottime$ and
$\rejuvenatetime$, respectively. We assume $\reboottime \gg
\rejuvenatetime$.  Hence, our goal is to prevent the processing unit ever
enters into the failure state $\failstate$ through rejuvenation. \\

\noindent \textbf{System Model}

To guarantee timing and QoS constraints,
we use the same two-processor architecture as in~\cite{Tai1997OptimalDutyPeriod,Guo15Reliability}.
More specifically, we assume the system contains two homogeneous
and independent processing units, i.e., a main processing unit
$\mainprocessor$ and a backup processing unit $\backupprocessor$.
Though the two processing units can be dedicated to real-time applications and
alternate between being idle or going through rejuvenation and
processing real-time tasks similar to~\cite{Tai1997OptimalDutyPeriod,Guo15Reliability},
with this approach, the two processing units are not fully
utilized and in fact, most of the time, at least one processing unit is in
an idle state. To better utilize both of the processing units and avoid
resource waste, we assume that the main processing unit $\mainprocessor$
executes real-time tasks and the backup processing unit $\backupprocessor$
executes non real-time tasks when the main processing unit operates
correctly, and executes real-time tasks when $\mainprocessor$ goes
through maintenance mode. The system model is shown in
Fig.~\ref{fig:model}.

\begin{figure}[ht]
  \centering
  \includegraphics[width = 0.7\textwidth]{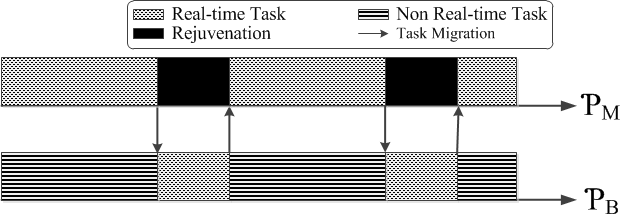} 
  \caption{System Model}
  \label{fig:model}  
\end{figure}

To avoid failure caused by aging effects, we assume that the main
processing unit does rejuvenation periodically with a period
$\rejuvenateperiod$.  Rejuvenation takes $\rejuvenatetime$ to
complete, we assume $\rejuvenateperiod > \rejuvenatetime$.  When the
main processing unit starts a rejuvenation process, to guarantee real-time
tasks still meet their deadlines, some or all of the real-time tasks
on the main processing unit are migrated to the backup processing unit to continue
their executions~\cite{Tai1997OptimalDutyPeriod}. The backup processing unit temporally suspends its non
real-time tasks and gives higher priority to the real-time tasks
migrated from the main processing unit. For planned rejuvenation, the start
time of a rejuvenation process is known a priori, hence we can
reasonably assume that, from real-time task's perspective, the
overhead for backup processing unit to suspend its execution of non
real-time tasks is negligible~\cite{Tai1997OptimalDutyPeriod}.

Furthermore, as the backup processing unit does not contain tasks with
deadline constraints, it can frequently be rebooted or rejuvenated at
time when the main processing unit operates in robust state.  Hence, we can
further assume that the backup processing unit $\backupprocessor$ is always
in the robust state when the main processing unit $\mainprocessor$ is in
rejuvenation state.
\\

\noindent \textbf{Real-Time Task Model}

The real-time task model considered in this paper is similar to the
one defined by Liu and Layland~\cite{CLLiu1973}. A task set
$\taskset=\{\task_1, \task_2, \dots, \task_\tasknumber\}$ has
$\tasknumber$ independent periodic tasks that are all released at time
0.  Each task $\task_\taskindex \in \taskset$ is a 2-tuple
$(\taskperiod_\taskindex, \taskexecution_\taskindex)$, where
$\taskperiod_\taskindex$ is the \textit{inter-arrival time} between
any two consecutive jobs of $\task_\taskindex$ (also called
\textit{period}), and $\taskexecution_\taskindex$ is the
\textit{worst-case execution time (WCET)}.  The deadline of each task is
equal to its period. The hyper-period of $\taskset$ is defined as the
LCM (Least Common Multiple) of each task's period, i.e., $\hyperperiod
= \mathtt{LCM} \{ \taskperiod_1, \taskperiod_2, \cdots,
\taskperiod_\tasknumber \}$.

In the system, task preemptions and task migrations are permitted. We also
assume that the overhead associated with task preemptions and 
migrations is considered into the task's worst case execution time.
\\

\noindent \textbf{Network Failure Model}

As the main processing unit $\mainprocessor$ and the backup
processing unit $\backupprocessor$ may locate on different
computers, the task migrations between the two processing units
need to be completed over a network.
We take the same assumption as in~\cite{Bertsekas1992DataNetwork} that 
the network transmission failure model follows Poisson
distribution, i.e., it has a constant failure rate $\switchfailrate$.
Task migrations between $\mainprocessor$ and $\backupprocessor$ may
fail because of network transmission failures. With constant network
transmission failure rate, the probability of a successful task
migration is hence a constant and it is denoted as $\switchsuccess$.
If $\mainprocessor$ and $\backupprocessor$ locate on the same
computer, then $\switchsuccess = 1$.
\\

\noindent \textbf{Aging Caused Transient Failure Model}

Since transient faults are more frequent than permanent
faults~\cite{Murray2003TransientMorePermanent}, we only consider the
transient faults.  As the system deteriorates with aging, we assume
that the transient failure rate $\failrate(t)$ increases with time
$t$~\cite{Grottke2008ISSRE,tobias2011book}. The CDF (Cumulative Distribution Function) of transient fault is
modeled as $\CDF(t) = 1-e^{-\int_{0}^{t} \failrate(x)
  dx}$~\cite{Barlow1996Reliability}.
 
After each rejuvenation, the system transient failure rate and
cumulative distribution function are reset to
$\failrate(\completetime) = \failrate(0) = 0$ and $\CDF(\completetime)
= \CDF(0) = 0$, where $\completetime$ is the time point when a
rejuvenation process completes.

Fig.~\ref{fig:reliability} illustrates the behaviors of system
rejuvenation and transient failure rate.
\begin{figure}[ht]
	\centering
	\includegraphics[width = 0.7\textwidth]{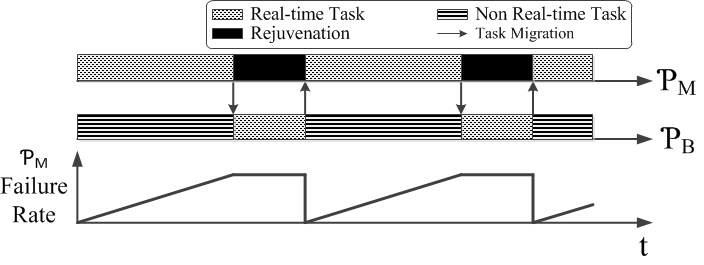} 
	\caption{System Rejuvenation and Transient Failure Rate}
	\label{fig:reliability}  
\end{figure}
\subsection{Problem Formulation}
\label{subsec:formulation}

The models and assumptions defined in Section~\ref{subsec:model}
indicate that the system reliability decreases over time because of
the increased failure rate caused by aging effects. To maintain system
reliability at the required level, on one hand, the system should
perform rejuvenation frequently, but on the other hand, every
rejuvenation requires tasks being migrated to and back from the backup
processing unit. Due to unreliable network, frequent migration between
processing units can negatively affect the system reliability. Hence, there
is a balanced point as to how frequently the system shall perform
rejuvenation so that the system reliability can be maximized.
When the system reliability requirement is given, the system longevity
and availability are also impacted by rejuvenation frequency.

Furthermore, although rejuvenation can slow down aging process, i.e.,
slow down system transient failure increase rate, and improves system
reliability, each rejuvenation not only causes the main processing unit
being unavailable to process real-time tasks, it also delays the
execution of non real-time applications deployed on the backup
processing unit. In other words, system reliability, longevity,
availability, and processing delay of non real-time applications can
all be affected by the frequencies of software rejuvenation
processes. In this paper, we are to address how to maximize system
reliability, longevity, and availability and minimize delays for
long-running applications with real-time constraints.

More specifically, we consider a real-time periodic task set
$\taskset$ which is deployed on main processing unit ($\mainprocessor$), and
a backup processing unit ($\backupprocessor$) which is connected with the
main processing unit through a network.  Assume that the main processing unit
transient failure rate is $\failrate(t)$ which increases with time,
and the network transmission failure rate is a constant $\switchfailrate$
which means the probability of successful task migration is also a constant $\switchsuccess$,
we are to address following four questions:

\myparagraph{Problem 1: (Reliability Maximization)} \textit{Given a system
	longevity $\lifetime$, determine an optimal
	rejuvenation period $\rejuvenateperiod$ that maximizes the system
	reliability $\reliabilitiy$ within its operational interval
	$[0, \lifetime]$.}

\myparagraph{Problem 2: (Longevity Maximization)} \textit{Given a system
  reliability constraint $\reliabilityrequire$, determine an optimal
  rejuvenation period $\rejuvenateperiod$ that maximizes the system
  operational interval $[0, \lifetime]$ in which the system
  reliability is guaranteed at $\reliabilityrequire$.}

\myparagraph{Problem 3: (Availability Maximization)} \textit{ Given
  system reliability $\reliabilityrequire$ and longevity $\lifetime$
  constraints, determine an optimal rejuvenation period
  $\rejuvenateperiod$ that maximizes the main processing unit's availability
  $\availability$ within its operational interval $[0, \lifetime]$.  }

\myparagraph{Problem 4: (Delay Minimization)} \textit{ Given system
  reliability $\reliabilityrequire$ and longevity $\lifetime$
  constraints, and rejuvenation period $\rejuvenateperiod$, design a
  real-time task scheduling algorithm for the main processing unit that
  minimizes the delay of non real-time tasks on the backup processing unit.
}

\section{System Reliability Maximization}
\label{sec:reliability}
\subsection{Reliability Analysis}

System reliability is defined as the probability that the system
operates without failure within a given time
interval~\cite{Barlow1996Reliability}. We refer system longevity as
its longest operational interval with guaranteed reliability.  Assume
the time interval the system operates is $[0, \lifetime]$, and the
rejuvenation period is $\rejuvenateperiod$, then the system performs
$(\lceil \lifetime/ \rejuvenateperiod \rceil-1)$ times rejuvenation,
and tasks migrate $2(\lceil \lifetime/ \rejuvenateperiod \rceil-1)$
times between the main and the backup processing units.
Hence, the system reliability within its longevity interval
$[0, \lifetime]$ is

\begin{align}
\label{eq:realReliability}
\reliabilitiy = \switchsuccess^{2(\left\lceil
		\frac{\lifetime}{\rejuvenateperiod} \right\rceil-1)} \cdot 
		\CCDF(\rejuvenateperiod)^{\left\lceil
		\frac{\lifetime}{\rejuvenateperiod} \right\rceil -1}
		\cdot \CCDF(t')
\end{align}
where $t'=\lifetime - \rejuvenateperiod \cdot (\left\lceil
\lifetime/\rejuvenateperiod \right\rceil -1)$ and 
$\CCDF(\rejuvenateperiod) = 1-\CDF(\rejuvenateperiod) =
e^{-\int_{0}^{\rejuvenateperiod} \failrate(t) dt}$~\cite{Guo15Reliability}.

The following lemma gives the worst case system reliability under the
settings defined above.

\begin{lemma}
\label{lm:minReliability}
Let system longevity be $\lifetime$ and rejuvenation period be $\rejuvenateperiod$, if
$\lifetime \ \mathtt{mod} \ \rejuvenateperiod = 0$, then the system has the lowest reliability given by Eq.~\eqref{eq:reliability}
\begin{align}
	\label{eq:reliability}
	\reliabilitiy = \switchsuccess^{2(
		\frac{L}{\rejuvenateperiod} -1)} \cdot 
	\CCDF(\rejuvenateperiod)^{\frac{L}{\rejuvenateperiod}}
\end{align}
\finish
\end{lemma}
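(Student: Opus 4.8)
The plan is to start from the general reliability expression in Eq.~\eqref{eq:realReliability} and isolate the only factor that depends on the length $t'$ of the final (possibly partial) rejuvenation cycle, namely $\CCDF(t')$. Writing $k = \lceil \lifetime/\rejuvenateperiod \rceil$, the number of full rejuvenations is $k-1$, we have $t' = \lifetime - (k-1)\rejuvenateperiod \in (0,\rejuvenateperiod]$, and Eq.~\eqref{eq:realReliability} reads $\reliabilitiy = \switchsuccess^{2(k-1)}\cdot \CCDF(\rejuvenateperiod)^{k-1}\cdot \CCDF(t')$. Here $\switchsuccess^{2(k-1)}$ and $\CCDF(\rejuvenateperiod)^{k-1}$ depend only on $k$ and $\rejuvenateperiod$, so for a fixed $\rejuvenateperiod$ and fixed $k$ the reliability is a function of $t'$ alone through $\CCDF(t')$.

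Next I would record the single analytic fact needed: $\CCDF(t)=e^{-\int_0^t \failrate(x)\,dx}$ is non-increasing on $[0,\infty)$, since the aging-induced failure rate satisfies $\failrate(x)\ge 0$ (and it is in fact strictly decreasing once $\failrate$ becomes positive, which happens after a negligible initial window and certainly within one period of any meaningful system). Because $t'\le \rejuvenateperiod$, this gives $\CCDF(t')\ge \CCDF(\rejuvenateperiod)$, and therefore
\[
\reliabilitiy \ \ge\ \switchsuccess^{2(k-1)}\cdot \CCDF(\rejuvenateperiod)^{k-1}\cdot \CCDF(\rejuvenateperiod)\ =\ \switchsuccess^{2(k-1)}\cdot \CCDF(\rejuvenateperiod)^{k},
\]
with equality precisely when $\CCDF(t')=\CCDF(\rejuvenateperiod)$, i.e. (by strict monotonicity) when $t'=\rejuvenateperiod$, i.e. when $\lifetime = k\rejuvenateperiod$, which is exactly the condition $\lifetime \bmod \rejuvenateperiod = 0$.

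I would then close the loop by direct substitution: when $\lifetime \bmod \rejuvenateperiod = 0$, we have $k=\lceil \lifetime/\rejuvenateperiod\rceil = \lifetime/\rejuvenateperiod$ and $t'=\rejuvenateperiod$, so $\CCDF(t')=\CCDF(\rejuvenateperiod)$ and Eq.~\eqref{eq:realReliability} collapses to $\switchsuccess^{2(\lifetime/\rejuvenateperiod-1)}\cdot \CCDF(\rejuvenateperiod)^{\lifetime/\rejuvenateperiod}$, which is exactly Eq.~\eqref{eq:reliability}. Combining this with the inequality above shows that Eq.~\eqref{eq:reliability} is indeed the worst-case reliability and that it is attained exactly on the divisibility condition.

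The ``hard part'' here is not analytic but a matter of stating the quantifier cleanly: ``lowest reliability'' should be read as the minimum of $\reliabilitiy$ over all longevities inducing the same rounded-up cycle count $\lceil \lifetime/\rejuvenateperiod\rceil$ — equivalently, as a uniform lower bound on $\reliabilitiy$ expressed purely in terms of $\rejuvenateperiod$ and $\lceil \lifetime/\rejuvenateperiod\rceil$ — and not a comparison across unrelated values of $\lifetime$, since changing $\lifetime$ across a multiple of $\rejuvenateperiod$ also changes the cycle count. I would also note the boundary case $k=1$ (no rejuvenation, $\lifetime\le\rejuvenateperiod$), where the bound degenerates to $\reliabilitiy=\CCDF(\lifetime)\ge\CCDF(\rejuvenateperiod)$ with equality again at $\lifetime=\rejuvenateperiod$, so the statement holds uniformly.
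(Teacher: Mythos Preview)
Your proof is correct and follows essentially the same approach as the paper: fix the first two factors of Eq.~\eqref{eq:realReliability}, observe that $\CCDF$ is decreasing so $\CCDF(t')$ is minimized at $t'=\rejuvenateperiod$, and substitute to obtain Eq.~\eqref{eq:reliability}. Your additional care in clarifying what ``lowest'' quantifies over and in handling the $k=1$ boundary case is a welcome refinement, but the underlying argument is identical to the paper's.
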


\begin{proof}	
As $\lifetime$ and
$\rejuvenateperiod$ are given, the first two factors
in Eq.~\eqref{eq:realReliability} are fixed. Hencc, the reliability
is minimal when $\CCDF(t')$ is minimal.

As $\CCDF(t)$ decreases with $t$, $\CCDF(t')$ is minimal
when $t' = \rejuvenateperiod$, i.e.,
$\lifetime \ \mathtt{mod} \ \rejuvenateperiod = 0$.
Hence, we have Eq.~\eqref{eq:reliability}.
\end{proof}

In the following discussions on system reliability, longevity, availability, and
non real-time application execution delays, we focus on the case where
the system has the worst case reliability, i.e.,
Eq.~(\ref{eq:reliability}).

\subsection{Reliability Maximization}

Based on Eq.~\eqref{eq:reliability}, system reliability is a function of two variables, i.e.,  $\lifetime$ and $\rejuvenateperiod$.
To identify the relationship between reliability and rejuvenation
period, we derive the partial derivative of $\reliabilitiy$
with respect to the variable $\rejuvenateperiod$ as follows.
\begin{align}
\nonumber
\begin{array}{rcl}
\frac{\partial \reliabilitiy}{\partial \rejuvenateperiod}
&=& -\frac{2\lifetime}{\rejuvenateperiod^2} \cdot 
	\switchsuccess^{2(\frac{\lifetime}{\rejuvenateperiod} -1)}
	\cdot \CCDF(\rejuvenateperiod)^{\frac{\lifetime}{\rejuvenateperiod}}
	\cdot \ln \switchsuccess\\
&+& \switchsuccess^{2(\frac{\lifetime}{\rejuvenateperiod} -1)} \cdot
	\CCDF(\rejuvenateperiod)^{\frac{\lifetime}{\rejuvenateperiod}}
	\cdot	(-\frac{\lifetime}{\rejuvenateperiod^2} \cdot \ln \CCDF(\rejuvenateperiod)\\
&+& \frac{\lifetime}{\rejuvenateperiod \CCDF(\rejuvenateperiod)} \cdot
	\frac{d \CCDF(\rejuvenateperiod)}{d \rejuvenateperiod})
\end{array}
\end{align}

Let $\frac{\partial R}{\partial \rejuvenateperiod} (\lifetime, \rejuvenateperiod) = 0$,
we have
\begin{align}
\label{eq:zeroPoint}
\frac{\rejuvenateperiod}{\CCDF(\rejuvenateperiod)}
\cdot \frac{d \CCDF(\rejuvenateperiod)}{d \rejuvenateperiod}
-\ln \CCDF(\rejuvenateperiod)
-2\ln \switchsuccess = 0.
\end{align}

As Eq.~\eqref{eq:reliability} is a concave function, the optimal rejuvenation period that maximizes the system reliability
can be calculated by solving Eq.~(\ref{eq:zeroPoint})
with given $\failrate(t)$ and $\switchsuccess$.

\begin{lemma}
\label{lm:optimalPeriod}
The optimal rejuvenation period is only influenced by network
transmission failure rate $\switchfailrate$ and transient
fault occurrence rate $\failrate(t)$, but not by system longevity $\lifetime$.
\finish
\end{lemma}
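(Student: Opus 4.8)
The plan is to read the claim directly off the first-order optimality condition already derived, Eq.~\eqref{eq:zeroPoint}, after confirming that this condition genuinely characterizes the optimal period. First I would note that, by the concavity of Eq.~\eqref{eq:reliability} in $\rejuvenateperiod$ (asserted in the text just above the lemma), the reliability-maximizing rejuvenation period is either the unique interior stationary point of $\reliabilitiy$ with respect to $\rejuvenateperiod$, or, if no interior stationary point exists, an endpoint of the feasible interval; that endpoint is fixed solely by the constraint $\rejuvenateperiod > \rejuvenatetime$, which does not involve $\lifetime$. So it suffices to treat the interior case and show the stationarity equation is $\lifetime$-free.

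Next I would carry out the factoring of $\frac{\partial \reliabilitiy}{\partial \rejuvenateperiod}$. Every summand in the displayed expression for $\frac{\partial \reliabilitiy}{\partial \rejuvenateperiod}$ shares the common factor $\switchsuccess^{2(\lifetime/\rejuvenateperiod-1)}\,\CCDF(\rejuvenateperiod)^{\lifetime/\rejuvenateperiod}\,\frac{\lifetime}{\rejuvenateperiod^{2}}$, which is strictly positive because $\switchsuccess\in(0,1]$, $\CCDF(\rejuvenateperiod)\in(0,1]$, and $\lifetime>0$. Dividing $\frac{\partial \reliabilitiy}{\partial \rejuvenateperiod}=0$ through by this factor collapses the equation exactly to Eq.~\eqref{eq:zeroPoint}, in which $\lifetime$ has cancelled and no longer appears. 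This cancellation is the whole content of the lemma, so the remaining work is only to spell out what the surviving terms depend on.

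Finally I would identify those surviving quantities. The only data entering Eq.~\eqref{eq:zeroPoint} besides $\rejuvenateperiod$ itself are $\switchsuccess$, $\CCDF(\rejuvenateperiod)$, and $\frac{d\CCDF(\rejuvenateperiod)}{d\rejuvenateperiod}$. By the network failure model $\switchsuccess$ is a constant determined entirely by the transmission failure rate $\switchfailrate$, and by the aging model $\CCDF(\rejuvenateperiod)=e^{-\int_{0}^{\rejuvenateperiod}\failrate(t)\,dt}$ with $\frac{d\CCDF(\rejuvenateperiod)}{d\rejuvenateperiod}=-\failrate(\rejuvenateperiod)\CCDF(\rejuvenateperiod)$, both determined entirely by the transient fault rate $\failrate(t)$. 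Hence the root $\rejuvenateperiod^{*}$ of Eq.~\eqref{eq:zeroPoint}, i.e., the optimal rejuvenation period, is a function of $\switchfailrate$ and $\failrate(t)$ alone and is independent of $\lifetime$. The only genuinely delicate point in the argument is the justification that stationarity pins the optimum down — existence/uniqueness of the interior root and the boundary alternative — but once the concavity claim preceding the lemma is granted, the rest reduces to the one-line observation that $\lifetime$ factors out of $\frac{\partial \reliabilitiy}{\partial \rejuvenateperiod}$.
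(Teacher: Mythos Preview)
Your proposal is correct and follows essentially the same approach as the paper: the paper's proof simply points to Eq.~\eqref{eq:zeroPoint}, observes that $\CCDF(t)=e^{-\int_0^t \failrate(x)\,dx}$ and that $\switchsuccess$ is determined by $\switchfailrate$, and concludes. Your argument is the same in spirit but more careful, explicitly justifying why stationarity characterizes the optimum and spelling out the factoring that eliminates $\lifetime$; one small caveat is that the upper end of the feasible interval for $\rejuvenateperiod$ could in principle involve $\lifetime$, so your boundary remark is not quite complete, but the paper ignores this case entirely as well.
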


\begin{proof}
The lemma can be directly proven by Eq.~\eqref{eq:zeroPoint}, where
$\CCDF(t) = e^{-\int_{0}^{t} \failrate(x) dx}$, and $\switchsuccess$ is
a constant with fixed $\switchfailrate$.
\end{proof}

The Weibull distribution is commonly used to model the distribution
of transient faults~\cite{Barlow1996Reliability}, with failure rate
$\failrate(t) = k t^{k-1} / r^k$
and cumulative distribution function $\CDF(t) = 1- e^{-(t/r)^k}$, 
where $r > 0$ and $k>0$ are scale and shape parameters.
The failure rate increases with time $t$ if $k>1$.

In Section~\ref{subsec:model}, we have made the assumption that due to aging effects, the system  transient failure rate increases with time.  Hence, we can use Weibull distribution with $k >1$ to model aging effects.  Substitute $\CCDF(t) = e^{-(t/r)^k}$ into
Eq.~(\ref{eq:zeroPoint}) and solve the equation,
we obtain the optimal rejuvenation period
that maximizes the system reliability as follows
\begin{align}
\label{eq:period}
\rejuvenateperiod^* = \sqrt[k]{\frac{2 r^k \ln \switchsuccess}{1-k}}.
\end{align}

\subsection{Simulation Results}

We use simulation to evaluate the relationship
between rejuvenation period and system reliability. 
The simulation parameters are set as following:
\begin{itemize}
\item Failure rate: $\failrate(t) = 3 t^2 / 10^9$
\item Probability of a successful task
	migration between $\mainprocessor$ and $\backupprocessor$:
	$\switchsuccess = 0.99999$
\item System Longevity: $\lifetime \in \{ 100, 1000 \}$
\item Rejuvenation periods: $\rejuvenateperiod \in \{1,5,10,\dots,95,100\}$
\end{itemize}

For each rejuvenation period, we use Eq.~\eqref{eq:realReliability}to calculate the system reliability $\reliabilitiy$. Fig.~\ref{fig:reliabilityExp}
shows the system reliability under different rejuvenation
periods for both longevity settings.

\begin{figure}[ht]
	\centering
	\pgfplotsset{
tick label style={font=\scriptsize},
label style={font=\scriptsize},
legend style={font=\scriptsize},
every axis/.append style={
thick,
tick style={semithick}}
}
	
	\begin{tikzpicture}
		\begin{axis}[
		xlabel=\text{\small{$\rejuvenateperiod$}},
		xmin=0,xmax=100,
		xtick={0,10,20,30,40,50,60,70,80,90,100},
		xlabel near ticks,
		ymin=0.98,ymax=1.0,	
		scaled y ticks=base 10:2,			
		ylabel=\text{\small{$\reliabilitiy$}},
		ylabel near ticks,
		legend style={at={(0.66,0.12)},anchor=west}		
		]

		\addplot [solid, mark=star]table[x=Ts,y=R] {fig/reliability100.txt};
		\addplot [dotted, mark=star]table[x=Ts,y=R] {fig/reliability1000.txt};

		\legend{$\lifetime = 100$, $\lifetime = 1000$};
		\end{axis}
	\end{tikzpicture}
	\caption{Reliability vs Rejuvenation Period}
	\label{fig:reliabilityExp}
\end{figure}
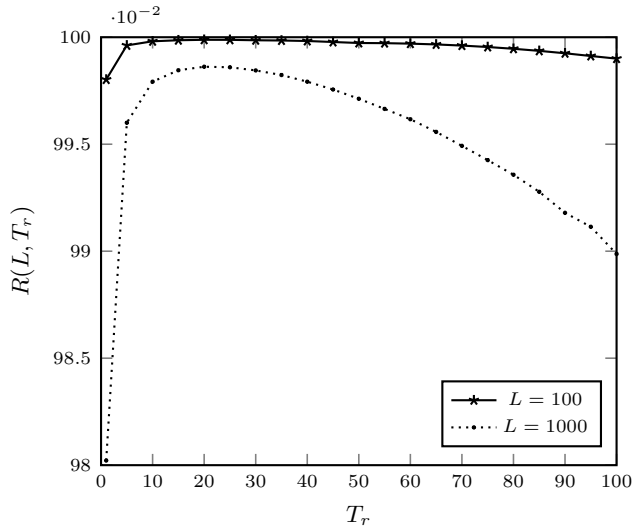

From Fig.~\ref{fig:reliabilityExp}, we have the
following observations:
\begin{enumerate}
\item When the rejuvenation period increases, the system
	reliability first increases and then decreases.
\item Neither too small rejuvenation period nor too large
	rejuvenation period has positive impact on the system
	reliability. Too frequent rejuvenation in fact lowers system reliability.
\item The experimental optimal rejuvenation period that maximizes
	the system reliability is consistent with the mathematical
	analysis (Eq.~\eqref{eq:zeroPoint}). In particular, for
	both experiment settings the optimal rejuvenation period
	$\rejuvenateperiod = 20$ is the nearest value with the
	mathematical analysis result $\rejuvenateperiod^* = 21.54$
	(Eq.~\eqref{eq:period}) among all provided rejuvenation
	periods.
\item The system
	longevity, i.e., its operation time, does not impact the optimal
	rejuvenation period for maximizing system reliability 
	which is consistent with Lemma~\ref{lm:optimalPeriod}.
	In particular, the optimal rejuvenation is $\rejuvenateperiod = 20$
	for both $\lifetime = 100$ and $\lifetime = 1000$ experiment
	settings.
\end{enumerate}

\section{System Longevity Maximization}
\label{sec:longevity}
\subsection{Longevity Analysis}
System longevity is defined as the system operational time with
guaranteed reliability $\reliabilityrequire$.  Due to system aging,
the system reliability decreases when its operational time increases.
If the rejuvenation period $\rejuvenateperiod$ is given, there is a
maximal longevity with which the system reliability requirement
$\reliabilityrequire$ is guaranteed. Our question is to determine the
optimal rejuvenation period that maximizes the system's longevity
without compromising the system's reliability requirement.
Reliability decreasing rate is one of the critical factors that
impacts the system's longevity.  The slower  the reliability
function decreases, the longer the system runs reliably. 
The first derivative of the reliability function Eq.~\eqref{eq:reliability}
with respect to $\rejuvenateperiod$ represents how the reliability
$\reliabilitiy$ changes with $\rejuvenateperiod$. While the second
derivative measures how fast the reliability $\reliabilitiy$ changes
with $\rejuvenateperiod$, i.e., the reliability decreasing rate.
The reliability decreasing rate is given below
\begin{align}
\label{eq:reliabilityDecreaseRate}
\begin{array}{rcl}
  S(\rejuvenateperiod)
  &=& \frac{\partial^2 \reliabilitiy}{\partial^2 \rejuvenateperiod}
  = \frac{\partial A}{\partial \rejuvenateperiod}\\
  &=& -\frac{2\lifetime}{\rejuvenateperiod^2} \cdot \ln \switchsuccess
  \cdot A + 
  \frac{4\lifetime}{\rejuvenateperiod^3} \cdot \ln \switchsuccess
  \cdot \switchsuccess^{2(\frac{\lifetime}{\rejuvenateperiod} -1)}	
  \cdot \CCDF(\rejuvenateperiod)^{\frac{\lifetime}{\rejuvenateperiod}}\\
  &+& A \cdot (-\frac{1}{\rejuvenateperiod^2} \cdot \ln
  \CCDF(\rejuvenateperiod) 
  + \frac{\lifetime}{\rejuvenateperiod \CCDF(\rejuvenateperiod)} \cdot
  \frac{d \CCDF(\rejuvenateperiod)}{d \rejuvenateperiod})\\
  &+& \switchsuccess^{2(\frac{\lifetime}{\rejuvenateperiod} -1)}	
  \cdot \CCDF(\rejuvenateperiod)^{\frac{\lifetime}{\rejuvenateperiod}}
  \cdot ( \frac{2}{\rejuvenateperiod^3} \cdot \ln \CCDF(\rejuvenateperiod)\\
  &-& \frac{1}{\rejuvenateperiod^2} \cdot
  \frac{d \ln \CCDF(\rejuvenateperiod)}{d \rejuvenateperiod}
  + \frac{-L \cdot (\CCDF(\rejuvenateperiod) + \rejuvenateperiod
    \cdot \frac{d \CCDF(\rejuvenateperiod)}{d \rejuvenateperiod})}
  {(\rejuvenateperiod \CCDF(\rejuvenateperiod))^2} \cdot
  \frac{d \CCDF(\rejuvenateperiod)}{d \rejuvenateperiod}\\
  &+& \frac{\lifetime}{\rejuvenateperiod \CCDF(\rejuvenateperiod)} \cdot
  \frac{d^2 \CCDF(\rejuvenateperiod)}{d^2 \rejuvenateperiod} )
\end{array}
\end{align}
where $A = -\frac{2\lifetime}{\rejuvenateperiod^2} \cdot
\switchsuccess^{2(\frac{\lifetime}{\rejuvenateperiod} -1)} \cdot
\CCDF(\rejuvenateperiod)^{\frac{\lifetime}{\rejuvenateperiod}} \cdot
\ln \switchsuccess +
\switchsuccess^{2(\frac{\lifetime}{\rejuvenateperiod} -1)} \cdot
\CCDF(\rejuvenateperiod)^{\frac{\lifetime}{\rejuvenateperiod}} \cdot
(-\frac{1}{\rejuvenateperiod^2} \cdot \ln \CCDF(\rejuvenateperiod) +
\frac{\lifetime}{\rejuvenateperiod \CCDF(\rejuvenateperiod)} \cdot
\frac{d \CCDF(\rejuvenateperiod)}{d \rejuvenateperiod})$
is the first derivative of $\reliabilitiy$ with respect to $\rejuvenateperiod$.
Noting that $S(\rejuvenateperiod)$ is a concave function.

The problem of maximizing system longevity is now transformed to
determine $\rejuvenateperiod$ that minimizes the value of
$S(\rejuvenateperiod)$ given by
Eq.~\eqref{eq:reliabilityDecreaseRate}.

To solve the problem, we obtain the first derivative of
$S(\rejuvenateperiod)$ and let the derivative be zero to calculate the
optimal $\rejuvenateperiod$. The first derivative of
$S(\rejuvenateperiod)$ is calculated as follows.
\begin{align}
\label{eq:rateDerivative}
\begin{array}{rcl}
  \frac{d S(\rejuvenateperiod)}{d \rejuvenateperiod}
  &=& \frac{4\lifetime}{\rejuvenateperiod^3} \cdot \ln \switchsuccess
  \cdot A - \frac{2\lifetime}{\rejuvenateperiod^2}
  \cdot \ln \switchsuccess \cdot \frac{d A}{d \rejuvenateperiod}\\
  &-& \frac{12\lifetime}{\rejuvenateperiod^4} \cdot \ln \switchsuccess
  \cdot \switchsuccess^{2(\frac{\lifetime}{\rejuvenateperiod} -1)}	
  \cdot \CCDF(\rejuvenateperiod)^{\frac{\lifetime}{\rejuvenateperiod}}
  + \frac{4\lifetime}{\rejuvenateperiod^3} \cdot \ln \switchsuccess
  \cdot A\\
  &+& \frac{d A}{d \rejuvenateperiod}
  \cdot (-\frac{1}{\rejuvenateperiod^2} \cdot \ln \CCDF(\rejuvenateperiod)
  + \frac{\lifetime}{\rejuvenateperiod \CCDF(\rejuvenateperiod)} \cdot
  \frac{d \CCDF(\rejuvenateperiod)}{d \rejuvenateperiod})\\
  &+& 2A \cdot ( \frac{2}{\rejuvenateperiod^3} \cdot \ln
  \CCDF(\rejuvenateperiod) 
  -\frac{1}{\rejuvenateperiod^2} \cdot
  \frac{d \ln \CCDF(\rejuvenateperiod)}{d \rejuvenateperiod}\\
  &+& \frac{-L \cdot (\CCDF(\rejuvenateperiod) + \rejuvenateperiod
    \cdot \frac{d \CCDF(\rejuvenateperiod)}{d \rejuvenateperiod})}
  {(\rejuvenateperiod \CCDF(\rejuvenateperiod))^2} \cdot
  \frac{d \CCDF(\rejuvenateperiod)}{d \rejuvenateperiod}
  + \frac{\lifetime}{\rejuvenateperiod \CCDF(\rejuvenateperiod)} \cdot
  \frac{d^2 \CCDF(\rejuvenateperiod)}{d^2 \rejuvenateperiod} )\\
  &+& \switchsuccess^{2(\frac{\lifetime}{\rejuvenateperiod} -1)}	
  \cdot \CCDF(\rejuvenateperiod)^{\frac{\lifetime}{\rejuvenateperiod}}
  \cdot ( \frac{-6}{\rejuvenateperiod^4} \cdot \ln \CCDF(\rejuvenateperiod)\\
  &+& \frac{2}{\rejuvenateperiod^3} \cdot
  \frac{d\ln \CCDF(\rejuvenateperiod)}{d \rejuvenateperiod}
  + \frac{2}{\rejuvenateperiod^3} \cdot
  \frac{d\ln \CCDF(\rejuvenateperiod)}{d \rejuvenateperiod}\\
  &-& \frac{1}{\rejuvenateperiod^2} \cdot
  \frac{d^2 \ln \CCDF(\rejuvenateperiod)}{d^2 \rejuvenateperiod}
  + \frac{d B}{d \rejuvenateperiod} \cdot
  \frac{d \CCDF(\rejuvenateperiod)}{d \rejuvenateperiod}
  + B \cdot \frac{d^2 \CCDF(\rejuvenateperiod)}{d^2 \rejuvenateperiod}\\
  &+& B \cdot \frac{d^2 \CCDF(\rejuvenateperiod)}{d^2 \rejuvenateperiod}
  + \frac{\lifetime}{\rejuvenateperiod \CCDF(\rejuvenateperiod)} \cdot
  \frac{d^3 \CCDF(\rejuvenateperiod)}{d^3 \rejuvenateperiod})
\end{array}
\end{align} 
where $B=\frac{-L \cdot (\CCDF(\rejuvenateperiod) + \rejuvenateperiod
  \cdot \frac{d \CCDF(\rejuvenateperiod)}{d \rejuvenateperiod})}
{(\rejuvenateperiod \CCDF(\rejuvenateperiod))^2}$ and $\frac{d B}{d
  \rejuvenateperiod} = -\lifetime \cdot \frac{ (2\frac{d
    \CCDF(\rejuvenateperiod)}{d \rejuvenateperiod}) \cdot
  (\rejuvenateperiod \CCDF(\rejuvenateperiod))^2 -
  2(\CCDF(\rejuvenateperiod) + \rejuvenateperiod \frac{d
    \CCDF(\rejuvenateperiod)}{d \rejuvenateperiod})^2}
{(\rejuvenateperiod \CCDF(\rejuvenateperiod))^4}$.

As $S(\rejuvenateperiod)$, i.e. Eq.~\eqref{eq:reliabilityDecreaseRate}, is a concave function,
the value $\rejuvenateperiod^*$ that satisfies $\frac{d
  S(\rejuvenateperiod)}{d \rejuvenateperiod} = 0$ is the optimal
rejuvenation period that maximizes the system longevity under
reliability requirement $\reliabilityrequire$.

\subsection{Simulation Results}

We use simulation to evaluate the relationship between rejuvenation
period and system longevity under a given reliability constraint. The
simulation parameters are set as following:
\begin{itemize}
\item Failure rate: $\failrate(t) = 3 t^2 / 10^9$
\item Probability of a successful task
migration between $\mainprocessor$ and $\backupprocessor$:
$\switchsuccess \in \{0.99999, 0.999999\}$
\item Reliability requirement: $\reliabilityrequire = 0.9997$
\item Rejuvenation periods: $\rejuvenateperiod \in \{1,5,10,\dots,95,100\}$
\end{itemize}

As the system reliability decreases when the rejuvenation period
increases, we calculate the maximal longevity satisfying
$\reliabilityrequire$ by increasing the longevity from $0$ until
$\reliabilityrequire$ fails.  Fig.~\ref{fig:lifetimeExp}
shows the maximal longevity that satisfies $\reliabilityrequire$ under
different rejuvenation periods.

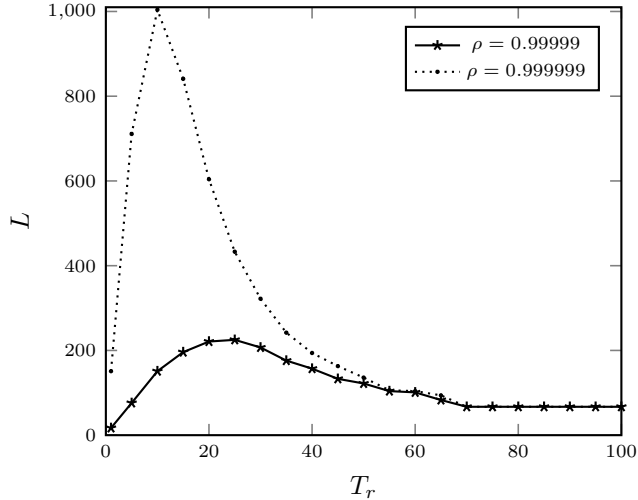
\begin{figure}[ht]
	\centering
	\pgfplotsset{
tick label style={font=\scriptsize},
label style={font=\scriptsize},
legend style={font=\scriptsize},
every axis/.append style={
thick,
tick style={semithick}}
}
	
	\begin{tikzpicture}
		\begin{axis}[
			xlabel=\text{$\rejuvenateperiod$},
			xmin=0,xmax=100,
		   	xlabel near ticks,
			ymin=0,ymax=1010,
			ylabel=\text{$\lifetime$},
			ylabel near ticks,
			legend style={at={(0.58,0.88)},anchor=west}		
			]

		\addplot [solid, mark=star]table[x=Ts,y=L] {fig/lifetime.txt};
		\addplot [dotted, mark=star]table[x=Ts,y=L2] {fig/lifetime.txt};
		\legend{$\switchsuccess = 0.99999$, $\switchsuccess = 0.999999$};
		\end{axis}
	\end{tikzpicture}
	\caption{Longevity vs Rejuvenation Period}
	\label{fig:lifetimeExp}
\end{figure}

From Fig.~\ref{fig:lifetimeExp}, we
have the following observations:
\begin{enumerate}
\item When the rejuvenation period increases, the longevity first
  increases and then decreases or remains the same. For instance, when
  $\switchsuccess = 0.99999$, the longevity increases when
  $\rejuvenateperiod$ increases from $1$ to $25$ and starts to
  decrease when $\rejuvenateperiod$ increases from $25$ to $70$.  For
  a more reliable network, i.e., when $\switchsuccess = 0.999999$, the
  system longevity reaches its maximal value of 1,004 when the
  rejuvenation period is $\rejuvenateperiod = 10$.
\item When the rejuvenation period is too short or too long, system's
  longevity is short.  For instance, $\switchsuccess = 0.99999$ and
  rejuvenation period is 1, the system longevity reaches its minimal
  value of $17$. For a more reliable network, the minimal system
  longevity is $151$.  Although it is longer compared with a less
  reliable networked system, it is much shorter than its optimal
  longevity, which is 1,004 in this case, as shown in
  Fig.~\ref{fig:lifetimeExp}.
\end{enumerate}

The reason behind these observations is that each rejuvenation
requires task migrations between the main and backup processing units.  Due
to possible failures during task migrations, more frequent
rejuvenation, i.e., a short rejuvenation period, encounters more
transmission failures and hence results in short system longevity.
When the network is more reliable, more frequent rejuvenation benefits
system longevity as shown depicted in Fig.~\ref{fig:lifetimeExp}.  On
the other hand, due to software aging, transient failures increase as
rejuvenation period increases.  Hence, when the rejuvenation period is
too long, the system longevity also decreases.

\section{System Availability Maximization}
\label{sec:availability}
\subsection{Availability Analysis}
Based on the system state model defined in Section~\ref{sec:model},
the system availability is defined as the probability that the system
is in either robust state ($\robuststate$) or failure probable state
($\failprobablestate$) at a time instant~\cite{Barlow1996Reliability}.
In essence, system availability is the ratio between the system
execution time and its longevity.

Assume within the system's longevity $\lifetime$, the main processing unit
$\mainprocessor$ performs $(\lceil \lifetime/ \rejuvenateperiod
\rceil-1)$ times of rejuvenations, the system downtime for each
rejuvenation is $\rejuvenatetime$, then the availability of the main
processing unit $\mainprocessor$ is
\begin{align}
\label{eq:availability}
\availability = \frac{\lifetime-(\left\lceil
	\frac{\lifetime}{\rejuvenateperiod} \right\rceil -1)
	 \cdot \rejuvenatetime}{\lifetime}
\end{align}
where the rejuvenation cost $\rejuvenatetime$ is a constant.

Fig.~\ref{fig:availabilityExp} plots Eq.~\eqref{eq:availability} under
the same setting as for Fig.~\ref{fig:lifetimeExp} with
$\rejuvenatetime=0.5$

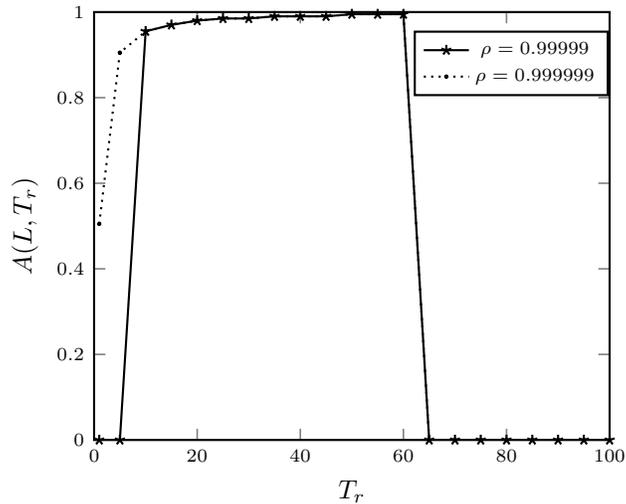
\begin{figure}[ht]
  \centering
  \pgfplotsset{
tick label style={font=\scriptsize},
label style={font=\scriptsize},
legend style={font=\scriptsize},
every axis/.append style={
thick,
tick style={semithick}}
}
	
	\begin{tikzpicture}
		\begin{axis}[
			xlabel=\text{$\rejuvenateperiod$},
			xmin=0,xmax=100,
			xlabel near ticks,
			ymin=0,ymax=1.0,
			ylabel=\text{$\availability$},
			ylabel near ticks,
			legend style={at={(0.62,0.88)},anchor=west}				
			]
			
		\addplot [solid, mark=star]table[x=Ts,y=A] {fig/availability.txt};
		\addplot [dotted, mark=star]table[x=Ts,y=A2] {fig/availability.txt};
		\legend{$\switchsuccess = 0.99999$, $\switchsuccess = 0.999999$};
		\end{axis}
	\end{tikzpicture}
  \caption{Availability vs Rejuvenation Period}
  \label{fig:availabilityExp}
\end{figure} 

\subsection{Availability Maximization under Reliability Constraint}
\label{subsec:ava-max}

Assume the system longevity is $\lifetime$, if the rejuvenation
number is fixed,  then the availability of the main processing unit $\mainprocessor$
is also fixed. According to Lemma~\ref{lm:minReliability},
the system reliability achieves its minimal value when
$\lifetime \ \mathtt{mod} \ \rejuvenateperiod = 0$.
For the availability maximization problem, we use the worst
case reliability (Eq.~(\ref{eq:reliability})) to check
the reliability requirement.

From Eq.~(\ref{eq:availability}), it is easy to see that the
availability increases as rejuvenation period increases, i.e., the
number of rejuvenation times decreases. If the system does not perform
any rejuvenation, the availability is $100\%$. However, the system
also has the reliability requirement $\reliabilityrequire$.  According
to the analysis in Section~\ref{sec:reliability}, it is possible that
the system's reliability decreases below the required level
$\reliabilityrequire$ without rejuvenation if the operational time
$\lifetime$ is long enough.  Hence, there is an optimal rejuvenation
period that maximizes the availability and at the same time guarantees
the satisfaction of reliability requirement.  The MAX-AVA algorithm 
given in Algorithm~\ref{alg:maxava} is designed to find such an
optimal rejuvenation period.

In particular, the MAX-AVA algorithm initially assumes that the system
does not need to perform rejuvenation, i.e., assumes $n = 0$ and
$\rejuvenateperiod = L$ (line 1 -2). If the reliability requirement is
violated when $\rejuvenateperiod = \lifetime$, we need to sacrifice
availability by increasing the number of rejuvenation times until
$\reliabilityrequire$ is satisfied (line 3-7). If the number of
rejuvenations needed is too large that causes total rejuvenation time
exceed the required system longevity, the algorithm returns $-1$,
signaling that the system fails to achieve the reliability requirement
(line 8-10). Hence, system reboot becomes necessary.

Once we obtain the rejuvenation period ($\rejuvenateperiod$) from the
MAX-AVA algorithm, system maximal availability can be calculated by
Eq.~\eqref{eq:availability}. We assume $\availability = 0$ if the
reliability requirement can not be satisfied, i.e., if MAX-AVA returns
-1.

\begin{algorithm}
\caption{\textsc{MAX-AVA}}
\label{alg:maxava}
\begin{algorithmic}[1]
\REQUIRE System longevity $\lifetime$,
		 reliability constraint $\reliabilityrequire$.
\ENSURE The optimal rejuvenation period $\rejuvenateperiod$
	that maximize system availability.

\STATE $n = 0$ 
\STATE $\rejuvenateperiod = \lifetime / (n+1)$
\WHILE {$R(\lifetime, \lifetime/n) < \reliabilityrequire \land n \le
  \lifetime$} 
	\STATE // $R(\lifetime, \lifetime/n)$ is calculated according
        to Eq.~\eqref{eq:reliability} 
	\STATE $n = n+1$
	\STATE $\rejuvenateperiod = \lifetime / (n+1)$
\ENDWHILE
\IF {$n > \lifetime$}
	\STATE $\rejuvenateperiod = -1$ \qquad // indicating failure 
\ENDIF
\RETURN $\rejuvenateperiod$
\end{algorithmic}
\end{algorithm}

\subsection{Simulation Results}
We use simulation to reveal the relationship between rejuvenation
period and availability of the main processing unit under a given system
reliability constraint.  The simulation parameters are set as
following:

\begin{itemize}
\item Failure rate: $\failrate(t) = 3 t^2 / 10^9$
\item Probability of a successful task
migration between $\mainprocessor$ and $\backupprocessor$:
$\switchsuccess \in \{0.99999, 0.999999\}$
\item Reliability requirement: $\reliabilityrequire = 0.9997$
\item Longevity: $\lifetime = 100$
\item Rejuvenation time cost: $\rejuvenatetime = 0.5$
\item Rejuvenation periods: $\rejuvenateperiod \in \{1,5,10,\dots,95,100\}$
\end{itemize}

For each rejuvenation period, we use Eq.~\eqref{eq:availability}
to calculate availability of the main
processing unit. We assume $\availability = 0$ if the reliability
requirement can not be satisfied.
Fig.~\ref{fig:availabilityExp} shows the availability
under different rejuvenation periods. 
From Fig.~\ref{fig:availabilityExp}, we have the
following observations:
\begin{enumerate}
\item In general, when rejuvenation period increases, system's
  availability increases. 
\item System availability has a maximum value.  In particular, for a
  given system longevity value $\lifetime = 100$, the maximal 
  system availability is 99.5\% for both $\switchsuccess =
  0.99999$ and $\switchsuccess = 0.999999$. 
\item Too small or too large rejuvenation periods cause system
  reliability to decrease below the required level, hence causes the
  system to become unavailable, i.e., availability is $0$. In
  particular, the system is unavailable when $\rejuvenateperiod >60$
  for both cases, and when $\rejuvenateperiod < 10$ for
  $\switchsuccess = 0.99999$.
\end{enumerate}

By applying Algorithm~\ref{alg:maxava}, we obtain that when
$\rejuvenateperiod = 50$, the system achieves its maximal availability
of $99.5\%$, which is consistent with the simulation results depicted
in Fig.~\ref{fig:availabilityExp}.

\section{Delay Minimization}
\label{sec:scheduling}
\subsection{Scheduling Algorithm}
\label{subsec:scheduling}

When the main processing unit is performing rejuvenation process, some or
all of the tasks deployed on the main processing unit may have to be
migrated to the backup processing unit for their executions. In order to
guarantee that real-time tasks satisfy their deadlines, the non
real-time tasks deployed on the backup processing unit may have to be
postponed. To optimize the system's QoS , the delay of non real-time
tasks on the backup processing unit caused by the main processing unit going
through rejuvenation shall be minimized.  Clearly, if rejuvenation
takes place at the time when the main processing unit is idle, we can
utilize the idle time and hence reduce the delay of non real-time
tasks on the backup processing unit.

For real-time systems, priority-driven scheduling by definition never
intentionally leaves resources idle, i.e., a resource becomes idle
\textit{only} when there is no ready job in the waiting
queue~\cite{JaneLiuBook}.  The Earliest Deadline First (EDF)
scheduling algorithm is one of the most commonly used priority-driven
scheduling algorithms for real-time
systems~\cite{CLLiu1973}. Fig.~\ref{fig:idle}(a) shows an example of
task set $\taskset$'s schedule based on EDF scheduling algorithm,
where $\taskset = \{ \task_1 (3,1), \task_2 (4,1), \task_3 (6,1) \}$.
For priority-driven scheduling algorithms, we have the following
observation:

\begin{observation}
\label{obs:idleEDF}
For priority-driven scheduling algorithms, such as EDF, the longest
idle time interval often occurs towards the end of a task set's
hyper-period.  \finish
\end{observation}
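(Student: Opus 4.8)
Because every task of $\taskset$ is released synchronously at time $0$ and we may assume the utilization $\utilization=\sum_{\taskindex=1}^{\tasknumber}\taskexecution_\taskindex/\taskperiod_\taskindex$ is strictly below $1$ (if $\utilization=1$ there is no idle time and the statement is vacuous), the schedule produced by a work-conserving priority-driven algorithm is periodic with period $\hyperperiod$: every job released in $[0,\hyperperiod)$ is completed by $\hyperperiod$, so the processor state at $\hyperperiod$ reproduces the one at $0$~\cite{JaneLiuBook}. Hence it suffices to analyze a single hyper-period $[0,\hyperperiod]$, in which the processor does exactly $\utilization\hyperperiod$ units of work and is idle for exactly $(1-\utilization)\hyperperiod$ time units; the goal is to argue that this idle time tends to gather in one block ending at $\hyperperiod$.

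First I would look at the two ends of $[0,\hyperperiod]$. At $t=0$ all $\tasknumber$ tasks deposit a backlog $\sum_{\taskindex}\taskexecution_\taskindex$ at once, the largest instantaneous backlog occurring anywhere in the schedule, so the processor runs continuously on an initial stretch and no idle time can appear near $0$. By periodicity the same burst recurs at $\hyperperiod$; but the last job release strictly before $\hyperperiod$ occurs at $\hyperperiod-\taskperiod_{\min}$, where $\taskperiod_{\min}=\min_{\taskindex}\taskperiod_\taskindex$, since $\task_\taskindex$ releases its last pre-$\hyperperiod$ job at $\hyperperiod-\taskperiod_\taskindex$. Thus no work at all arrives during $(\hyperperiod-\taskperiod_{\min},\hyperperiod)$.

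Next I would use the shortest-period task as a ``fence'': it releases a fresh job at every multiple of $\taskperiod_{\min}$, and a work-conserving processor cannot stay idle at such an instant, so every idle interval is trapped strictly between two consecutive multiples of $\taskperiod_{\min}$ (or between the last one, $\hyperperiod-\taskperiod_{\min}$, and $\hyperperiod$) and therefore has length below $\taskperiod_{\min}$. An interior idle interval is moreover cut short by the very next release after it begins, and shortened further by whatever backlog is outstanding at that point, so interior idle intervals stay well under the $\taskperiod_{\min}$ bound. The terminal interval $[\completetime,\hyperperiod]$, where $\completetime$ is the completion time of the last job of the hyper-period, is the only one followed by no further release, so it alone can approach the full residual $\taskperiod_{\min}$ — it equals $\taskperiod_{\min}$ minus the work still outstanding at $\hyperperiod-\taskperiod_{\min}$ — and in a typical task set that residual work is small. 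Hence the terminal idle interval is the longest. I would conclude by checking the running example $\taskset=\{\task_1(3,1),\task_2(4,1),\task_3(6,1)\}$ of Fig.~\ref{fig:idle}(a): here $\hyperperiod=12$, $\utilization=3/4$, the total idle time is $3$, and the EDF schedule idles only on $[5,6]$ and on $[10,12]$, so the longest idle interval $[10,12]$ indeed abuts $\hyperperiod$, as claimed.

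\textbf{Main obstacle.} The soft step is the clause ``in a typical task set that residual work is small'': one can engineer a task set in which a short-period job arrives late and keeps the processor busy through nearly all of $[\hyperperiod-\taskperiod_{\min},\hyperperiod]$, so that an interior idle interval turns out to be the longest — which is exactly why the statement is phrased as an \emph{observation} containing ``often'' rather than as a theorem. A fully rigorous version would need an extra hypothesis (for instance a quantitative bound tying $\utilization$, $\taskperiod_{\min}$ and the $\taskexecution_\taskindex$'s together) forcing the terminal busy stretch to be shorter than every interior busy-to-idle gap; absent such a hypothesis, the structural reasoning above together with the worked example is the level of justification the scheduling section actually requires.
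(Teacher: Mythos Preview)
Your proposal is considerably more than what the paper itself supplies. In the paper, Observation~\ref{obs:idleEDF} is stated without proof: it is simply asserted, illustrated by the single worked example $\taskset=\{\task_1(3,1),\task_2(4,1),\task_3(6,1)\}$ of Fig.~\ref{fig:idle}(a), and immediately followed by the dual Observation~\ref{obs:idleLRT} for LRT. The word ``observation'' and the qualifier ``often'' signal that the authors are recording an empirical pattern motivating the MIN-DELAY heuristic, not a theorem they intend to prove.

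Your structural argument---the synchronous release burst at $0$, the release-free window $(\hyperperiod-\taskperiod_{\min},\hyperperiod)$, the $\taskperiod_{\min}$ fencing of interior idle intervals, and the explicit acknowledgment that a rigorous version would need an extra quantitative hypothesis---is sound as a heuristic justification and goes well beyond the paper's own level of argument. You also correctly diagnose why the statement cannot be sharpened to a theorem without additional assumptions, which matches the authors' choice of phrasing. In short: there is no discrepancy to flag, because the paper offers only the example you already reproduce in your last paragraph; your write-up would in fact strengthen the section.
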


The reverse of EDF is the Latest Release Time (LRT) scheduling
algorithm~\cite{JaneLiuBook} which schedules jobs backwards from the
latest deadline of all jobs to the earliest release time.  For the
same task set given above, Fig.~\ref{fig:idle}(b) gives the schedule
based on the LRT scheduling algorithm.  For the LRT scheduling
algorithm, we have the following observation:

\begin{observation}
\label{obs:idleLRT}
For the LRT scheduling algorithm, the longest idle time interval often
occurs towards the begin of a task set's hyper-period.  \finish
\end{observation}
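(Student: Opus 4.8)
The plan is to exploit the duality between LRT and EDF: LRT is, in effect, EDF executed with the roles of release times and deadlines interchanged and with the time axis reversed~\cite{JaneLiuBook}. So the first step is to make this correspondence precise for our job model. Consider the jobs generated by $\taskset$ over one hyper-period $[0,\hyperperiod]$: task $\task_\taskindex$ releases its $k$-th job ($0\le k<\hyperperiod/\taskperiod_\taskindex$) at time $k\taskperiod_\taskindex$ with deadline $(k+1)\taskperiod_\taskindex$. Applying the reflection $t\mapsto\hyperperiod-t$ and using $\hyperperiod\equiv 0\pmod{\taskperiod_\taskindex}$, I would check that the image of this job is again a job of $\task_\taskindex$ (its $(\hyperperiod/\taskperiod_\taskindex-k-1)$-th job, with release time and deadline swapped). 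Hence the job set of $\taskset$ on $[0,\hyperperiod]$ is invariant under reflection about $\hyperperiod/2$, up to the swap of the roles of release times and deadlines.

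Next I would argue that, under this reflection, the LRT schedule of $\taskset$ maps to an EDF schedule of $\taskset$. Informally: scanning backward from $\hyperperiod$, LRT always runs the ready job with the latest release time; after reflecting the time axis, ``latest release time'' becomes ``earliest deadline'' and ``scanning backward from $\hyperperiod$'' becomes ``scanning forward from $0$'', which is precisely the EDF rule. Thus a feasible LRT schedule reflects to a feasible EDF schedule, and — the point we need — idle intervals are carried to idle intervals of equal length: an idle interval $[a,b]$ of the LRT schedule corresponds to the idle interval $[\hyperperiod-b,\hyperperiod-a]$ of the EDF schedule. (An equivalent, more hands-on way to see the same thing is that LRT pushes every job as late as its deadline and successors permit, so execution accumulates near the ends of periods and near $\hyperperiod$, leaving slack near the start.)

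Finally I would invoke Observation~\ref{obs:idleEDF}: under EDF the longest idle interval tends to occur towards the end of the hyper-period, i.e.\ near $\hyperperiod$. Reflection about $\hyperperiod/2$ sends a neighborhood of $\hyperperiod$ to a neighborhood of $0$, so the correspondence of the previous paragraph places the longest idle interval of the LRT schedule near the start of $[0,\hyperperiod]$, which is the claim. Since both schedules are periodic with period $\hyperperiod$, treating a single hyper-period suffices.

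I expect the main obstacle to be the tie-breaking subtlety: EDF and LRT each require a convention for choosing among jobs with equal deadline (resp.\ equal release time), and the reflection maps one consistent convention to a specific partner convention, not to an arbitrary one. Because of this, and because Observation~\ref{obs:idleEDF} is itself an ``often'' statement about where the \emph{longest} gap lands rather than a worst-case guarantee, the conclusion is inherently ``often'' rather than ``always'' — which is exactly the qualified form in which the observation is stated.
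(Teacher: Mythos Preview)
Your argument is correct, and in fact it goes well beyond what the paper does. In the paper this statement is labeled an \emph{Observation}, not a lemma, and no proof is given: the authors simply state it and point to the schedule in Fig.~\ref{fig:idle}(b) as an illustrative instance (``Observation~\ref{obs:idleEDF} and Observation~\ref{obs:idleLRT} are manifested in Fig.~\ref{fig:idle}\ldots''). Your time-reversal duality between EDF and LRT --- reflecting the hyper-period about $\hyperperiod/2$, swapping release times with deadlines, and carrying idle intervals to idle intervals of equal length at mirrored positions --- is the standard and right way to derive Observation~\ref{obs:idleLRT} from Observation~\ref{obs:idleEDF}, and your caveat about tie-breaking conventions and the inherently ``often'' (heuristic) nature of the claim is appropriate. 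So: different route, and strictly more content than the paper's own treatment; what the paper buys with its approach is only brevity and an intuitive picture, while your approach actually explains \emph{why} the observation holds.
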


Observation~\ref{obs:idleEDF} and Observation~\ref{obs:idleLRT} are
manifested in Fig.~\ref{fig:idle}, where the longest idle interval is
2 time units, which occur in the interval of $[10, 12]$ (end of
hyper-period), and $[0, 2]$ (beginning of hyper-period) with EDF and
LRT, respectively.

\begin{figure}[ht]
\centering
\includegraphics[width = 0.7\textwidth]{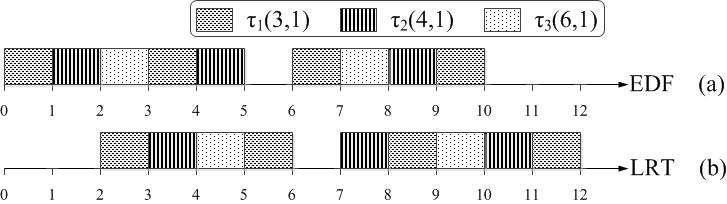}
\caption{EDF and LRT Scheduling for Task Set $\taskset$}
\label{fig:idle}
\end{figure}

The two observations provide us with the design base for our MIN-DELAY algorithm.
We use the following example to explain the intuitions.

\begin{example}
\label{ex:idleMotivation}
Consider the same periodic real-time task set $\taskset = \{ \task_1
(3,1), \task_2 (4,1), \task_3 (6,1) \}$ with hyper-period
$\hyperperiod = 12$, assume each rejuvenation takes $\rejuvenatetime =
2$ to complete and rejuvenation period is $\rejuvenateperiod = 7$,
then the first rejuvenation starts at time $t=7$.

If we use EDF to schedule the task set $\taskset$, the delay for non
real-time tasks on the backup processing unit is $\delay = 2$, and the
delay time interval is $[7, 9]$, as shown in
Fig.~\ref{fig:rejuvenation}.

\begin{figure}[ht]
\centering
\includegraphics[width = 0.7\textwidth]{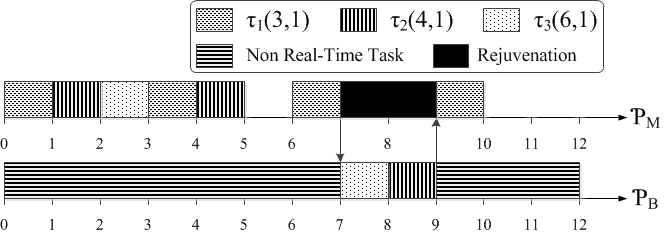}
\caption{Non Real-Time Task Delay}
\label{fig:rejuvenation}
\end{figure}

However, as shown in Fig.~\ref{fig:idle}, the EDF scheduling has an
idle time interval $[5,6]$. If we start the rejuvenation at time
$t=5$, we can utilize the idle time to reduce the delay.
Additionally, if we push the second idle time interval forward to the
rejuvenation starting time, we can further reduce the delay. Based on
Observation~\ref{obs:idleLRT}, we can use the LRT algorithm to
schedule jobs that are released after the rejuvenation starting time
to maximize the continuous idle time
interval.

Fig.~\ref{fig:idleMotivation} shows a schedule that not only
guarantees real-time tasks meeting their deadlines, but also allow
rejuvenation to take place without delaying any non real-time tasks.
In this case, the delay is $\delay = 0$ and the rejuvenation takes place in
time interval $[5, 7]$.

\begin{figure}[ht]
	\centering
	\includegraphics[width = 0.7\textwidth]{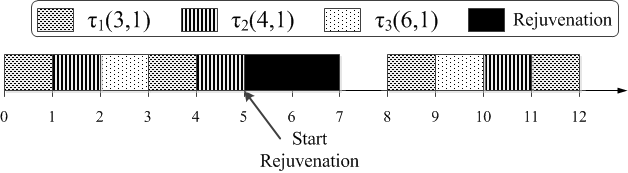} 
	\caption{Motivation Scheduling of $\taskset$}
	\label{fig:idleMotivation}
\end{figure}

\finish
\end{example}

To generalize the strategy used in Example~\ref{ex:idleMotivation},
assume a given optimal rejuvenation start time $t$ is within the task
set's $k$th hyper-period, i.e., $k\hyperperiod \le t \le
(k+1)\hyperperiod$, the rejuvenation process may take place in a time
interval in $[(k-1)\hyperperiod,(k+n+2)\hyperperiod]$, i.e., $[\effectivestart,
\effectivestart+\rejuvenatetime] \subseteq [(k-1)\hyperperiod,(k+n+2)\hyperperiod]$,
where $\effectivestart$ is the actual rejuvenation
start time which equals to the last idle time before $t$,
and $n=\lfloor \rejuvenatetime / \hyperperiod \rfloor$.  We
use EDF to schedule jobs released within $[(k-1)\hyperperiod,
\effectivestart]$, and use
LRT to schedule jobs that are released within
$[\effectivestart,(k+n+2)\hyperperiod]$ to push idle time towards the
beginning of the interval, i.e., towards $\effectivestart$.
Fig.~\ref{fig:schedule} shows the scheduling strategy.  As both EDF
and LRT are optimal from schedulability
perspective~\cite{JaneLiuBook}, hence, our scheduling strategy has the
same schedulability as EDF or LRT scheduling algorithm.

\begin{figure}[ht]
	\centering
	\includegraphics[width = 0.7\textwidth]{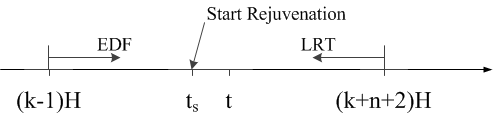} 
	\caption{Mixed Scheduling}
	\label{fig:schedule}  
\end{figure}

According to above analysis, the actual rejuvenation start time
$\effectivestart$ is smaller than the computed optimal rejuvenation
start time $t$, which lowers the rejuvenation period $\rejuvenateperiod$.
During the system longevity $\lifetime$, if the rejuvenation period
is lowered too much, the rejuvenation number may becomes larger than
the original rejuvenation number, which also enlarges the delay $\delay$
of non real-time applications on the backup processing unit $\backupprocessor$.
To minimize the delay $\delay$, the minimal rejuvenation period
$\rejuvenateperiodmin$ must guarantee that the rejuvenation number
with $\rejuvenateperiodmin$ is equal to the original rejuvenation number
with given rejuvenation period $\rejuvenateperiod$, i.e.,
\begin{align}
\label{eq:minRejuPeriod}
\rejuvenateperiodmin = \min \{ \rejuvenateperiodmin \in \mathbb{N} | \lfloor \lifetime / \rejuvenateperiodmin \rfloor = \lfloor \lifetime / \rejuvenateperiod \rfloor\}
\end{align}

In Example~\ref{ex:idleMotivation}, the rejuvenation starts at time 5
which is two time unit earlier than its scheduled rejuvenation time 7.
Based on the system reliability analysis in
Section~\ref{sec:reliability}, shorter rejuvenation period may cause
the system not meeting its reliability requirement
$\reliabilityrequire$.  Hence, we have to verify system reliability
requirement before changing the actual rejuvenation start time.

As discussed in Section~\ref{sec:reliability}, when the rejuvenation
period increases, the system reliability first increases and then
decreases. Hence, we can calculate the minimal rejuvenation period
$\rejuvenateperiodrequire$ that satisfies the system reliability
requirement $\reliabilityrequire$ using
Eq.~\eqref{eq:realReliability}. To maintain reliability requirement,
we must guarantee that the actual rejuvenation start time is no less
than $\rejuvenateperiodrequire$.
Therefore, the actual rejuvenation start time $\effectivestart$
must be no less than $\max \{ \rejuvenateperiodmin, \rejuvenateperiodrequire \}$
to minimize the delay and maintain system reliability requirement.

We now give the MIN-DELAY scheduling algorithm in
Algorithm~\ref{alg:mindelay}. Given system with longevity
$\lifetime$, reliability requirement $\reliabilityrequire$, and the
optimal rejuvenation period $\rejuvenateperiod$, first, we calculate
$\rejuvenateperiodrequire$ that guarantees $\reliabilityrequire$,
$\rejuvenateperiodmin$ that maintains rejuvenation number, and
$T$ which is the minimal rejuvenation period satisfying
system reliability and rejuvenation number requirements (Line 1-3).
In the scheduling process,
the algorithm determines the actual rejuvenation start times that
satisfies system reliability and rejuvenation number requirements (Line 8-10) and schedules jobs
based on EDF or LRT depending on the job release time with respect to
the rejuvenation start time (Line 11-14).  The complexity of the
algorithm is $O(L^2)$.

\begin{algorithm}
\caption{\textsc{MIN-DELAY}}
\label{alg:mindelay}
\begin{algorithmic}[1]
\REQUIRE Real-time periodic task set $\taskset$, rejuvenation 
		period $\rejuvenateperiod$, rejuvenation cost
		$\rejuvenatetime$, system longevity $\lifetime$, and
		system reliability requirement $\reliabilityrequire$

\STATE Calculate $\rejuvenateperiodrequire$ that guarantees $\reliabilityrequire$ using
        Eq.~\eqref{eq:realReliability}
\STATE Calculate $\rejuvenateperiodmin$ using Eq.~\eqref{eq:minRejuPeriod}
\STATE $T = \max \{ \rejuvenateperiodmin, \rejuvenateperiodrequire \}$
\STATE $n=\lfloor \rejuvenatetime / \hyperperiod \rfloor$ 
\STATE Initialize actual rejuvenation start time $\effectivestart =
\rejuvenateperiod$ 
\STATE $t_1 = 0$
\WHILE {$\lifetime > \effectivestart$}		
	\IF {EDF has idle time during $[\effectivestart-(\rejuvenateperiod-T),\effectivestart]$}
		\STATE $\effectivestart$ is the first idle begin time during $[\effectivestart-(\rejuvenateperiod-T),\effectivestart]$
	\ENDIF
	
	\STATE Schedule jobs released during $[t_1, \effectivestart]$ with EDF
	\STATE $t_2 = \lfloor \effectivestart/\hyperperiod \rfloor \cdot \hyperperiod$		
	\STATE $t_1 = t_2 + (n+2)\hyperperiod$	
	\STATE Schedule jobs released during $[\effectivestart, t_1]$ with LRT

	\STATE $\effectivestart = \effectivestart + \rejuvenatetime + \rejuvenateperiod$
\ENDWHILE

\STATE Schedule jobs released during $[t_1, \lifetime]$ with EDF
\end{algorithmic}
\end{algorithm}

\subsection{Simulation Results}

In this section, we evaluate the performance of the proposed MIN-DELAY
scheduling algorithm and compare it with EDF and LRT scheduling
algorithms~\cite{CLLiu1973,JaneLiuBook}. Our evaluation criteria is the delay of
non real-time tasks on the backup processing unit.

\subsubsection{Task Set Utilization Impact}\

This set of experiments evaluates the performance
of the proposed MIN-DELAY scheduling algorithm under
different task set utilizations. The experiment settings
are given below.
\begin{itemize}
	\item Number of tasks in a task set: $5$
	\item Task period range: $[10,20]$
	\item Task set utilizations: $U_\Gamma \in \{0.3,0.4,\dots,1.0\}$
	\item System longevity: $\lifetime = 10,000,000$
	\item Optimal rejuvenation period: $\rejuvenateperiod = 2,000,000$
	\item Minimal rejuvenation period: $\rejuvenateperiodrequire = 1,900,000$
	\item Rejuvenation time cost: $\rejuvenatetime = 100,000$
\end{itemize}

For each utilization option, we randomly generate 100 task sets
with the UUniform algorithm~\cite{Bini2005MetricGeneration}.
We schedule each task set and compute delays on the backup processing unit
with EDF, LRT, and MIN-DELAY algorithms, respectively. The average value
is used to represent the performance of each algorithm.

Fig.~\ref{fig:schedulingExp} shows the delay under different task set
utilizations. From Fig.~\ref{fig:schedulingExp}, we have the following
observations:
\begin{enumerate}
	\item For all scheduling algorithms, the delay increases when task
	set utilization increases.
	\item The proposed MIN-DELAY algorithm outperforms the EDF and LRT algorithms
	by as much as $9.01\%$ and $14.24\%$ under different task set utilizations, respectively.
	\item The performance advantage of MIN-DELAY algorithm decreases
	when task set utilization increases. In particular, the MIN-DELAY
	algorithm results in $9.01\%$ less delay than EDF when task set
	utilization is $0.3$, while the two algorithms have the same delay
	when task set utilization reaches $1.0$. 
\end{enumerate}

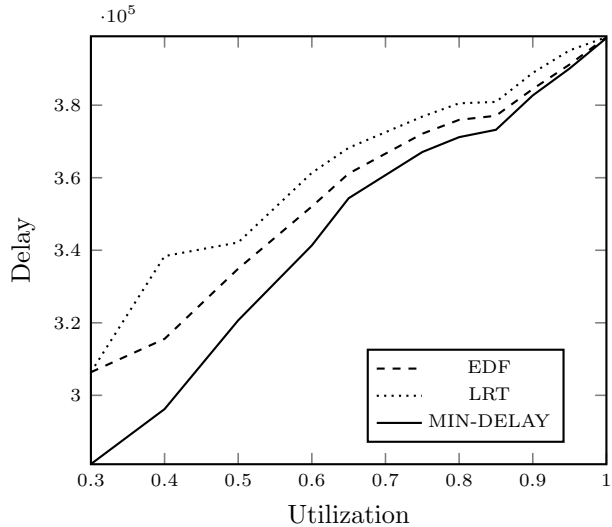
\begin{figure}[ht]
	\centering
	\pgfplotsset{
tick label style={font=\scriptsize},
label style={font=\scriptsize},
legend style={font=\scriptsize},
every axis/.append style={
thick,
tick style={semithick}}
}

\begin{tikzpicture}
	\begin{axis}[
		xlabel=\text{Utilization},
		xmin=0.3,xmax=1,
	    xtick={0.3,0.4,0.5,0.6,0.7,0.8,0.9, 1.0},
		xlabel near ticks,
		ymin=281000,ymax=399000,
		ylabel=\text{Delay},
		ylabel near ticks,
		legend style={at={(0.535,0.16)},anchor=west}
		]
	
	\addplot [dashed]table[x=U,y=EDFdelay] {fig/schedulingExp.txt};
	\addplot [dotted]table[x=U,y=LRTdelay] {fig/schedulingExp.txt};
	\addplot [solid]table[x=U,y=delay] {fig/schedulingExp.txt};
	\legend{EDF, LRT, MIN-DELAY}
	\end{axis}
\end{tikzpicture}
	\caption{Delay vs Utilization}
	\label{fig:schedulingExp}
\end{figure}

\subsubsection{Rejuvenation Time Cost Impact}\

The second set of experiments is to evaluate rejuvenation time
($\rejuvenatetime$) impact on the performance of the proposed
MIN-DELAY scheduling algorithm. The experiment settings are the same
as the previous experiments except that we fix the task set
utilization at 0.6 and set $\rejuvenatetime \in \{100,000,~ 110,000,\\ \dots,~ 200,000\}$.

Fig.~\ref{fig:rejuCostExp} shows the delay under different
rejuvenation time $\rejuvenatetime$. From Fig.~\ref{fig:rejuCostExp},
we have the following observations:
\begin{enumerate}
	\item For all scheduling algorithms, the delay increases when the
	rejuvenation time cost increases.
	\item The proposed MIN-DELAY algorithm outperforms the EDF and LRT algorithm
	by as much as $10.21\%$ and $10.23\%$ under different rejuvenation time costs, respectively.
	\item The EDF and LRT scheduling algorithms have similar performance. 
\end{enumerate}

\begin{figure}[ht]
	\centering
	\pgfplotsset{
tick label style={font=\scriptsize},
label style={font=\scriptsize},
legend style={font=\scriptsize},
every axis/.append style={
thick,
tick style={semithick}}
}

\begin{tikzpicture}
	\begin{axis}[
		xlabel=\text{Rejuvenation Time Cost},
		xmin=100000,xmax=200000,
		xlabel near ticks,
		ymin=329000,ymax=773000,
		ylabel=\text{Delay},
		ylabel near ticks,
		legend style={at={(0.535,0.16)},anchor=west}
		]
	
	\addplot [dashed]table[x=RC,y=EDFdelay] {fig/rejuCostExp.txt};
	\addplot [dotted]table[x=RC,y=LRTdelay] {fig/rejuCostExp.txt};
	\addplot [solid]table[x=RC,y=delay] {fig/rejuCostExp.txt};
	\legend{EDF, LRT, MIN-DELAY}
	\end{axis}
\end{tikzpicture}
	\caption{Delay vs Rejuvenation Time Cost ($U_\Gamma = 0.6$)}
	\label{fig:rejuCostExp}
\end{figure}
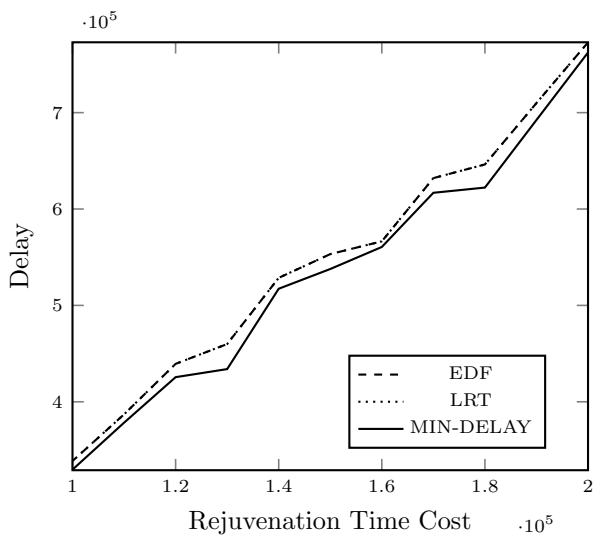

Both sets of experiments show that the proposed MIN-DELAY algorithm
has advantages over the EDF algorithm with respect to application
execution delay on the backup processing unit.

\section{Conclusion}
\label{sec:conclusion}
In this paper, we use software rejuvenation as a preventive technique
to improve system's QoS for long-running applications with real-time constraints.
We have formally analyzed the relationship
between software rejuvenation frequency and system reliability, longevity,
and availability. Based on the theoretic analysis, we have
developed approaches to maximizing system reliability, longevity, and availability,
and minimizing application execution delays on the backup processing unit. The developed
semi-priority-driven scheduling algorithm, i.e., the MIN-DELAY
scheduling algorithm can reduce application delay by
$9.01\%$ and $14.24\%$ over the EDF and LRT scheduling algorithms, respectively.

\section*{Acknowledgment}
\label{sec:acknowledgement}
The research is supported in part by NSF CNS 1545008.

\bibliographystyle{abbrv}
\bibliography{ref}

\end{document}